\pgfplotsset{compat=1.5}
\newcommand{\ie}{i.\,e.\xspace}
\newcommand{\eg}{e.\,g.\xspace}
\begin{document}
\frontmatter          %
\pagestyle{headings}  %
\mainmatter           %
\title{Towards a Concurrent and Distributed Route Selection for Payment Channel Networks}
\titlerunning{TBA}
\author{Elias Rohrer\inst{1} \and Jann-Frederik La\ss{}\inst{2} \and Florian Tschorsch\inst{1}}
\authorrunning{Rohrer et al.} %
\institute{Technical University of Berlin,\\
\email{\{elias.rohrer, florian.tschorsch\}@tu-berlin.de}
\and
Humboldt University of Berlin,\\
\email{lassjann@informatik.hu-berlin.de}}

\maketitle

\begin{abstract}
\vspace{-2ex}
Payment channel networks use off-chain transactions to provide virtually
arbitrary transaction rates.
In this paper, we provide a new perspective on payment channels and consider them as a flow network.
We propose an extended push-relabel algorithm to find payment flows in a payment channel network.
Our algorithm enables a distributed and concurrent execution without violating capacity constraints.
To this end, we introduce the concept of capacity locking. %
We prove that flows are valid and present first results.
\vspace{-2ex}
\end{abstract}

\section{Introduction}

It seems that blockchain-based systems such as Bitcoin~\cite{nakamoto2008bitcoin} will,
due to their requirements regarding storage, processing power, and bandwidth,
not be able to natively scale to high transaction rates~\cite{croman2016scalingblockchains}.
Off-chain approaches~\cite{wattenhofer2015duplexmicropayment,poon2015bitcoin}, however,
offer a way to create long-lived payment channels between two nodes.
The payments transferred via a payment channel are processed locally
and therefore eliminate the need to commit each individual transaction to the blockchain.

In order to enable payments between any two nodes---whether they are directly connected or
not---payment channels form a network in which payments can be routed over more than one hop.
Finding a route that can process a certain transaction volume is challenging, though.
Related approaches~\cite{prihodko2016flare} cannot guarantee to utilize the available capacities
as they focus on finding a single path. %
We argue that single-path routing restricts the transferable amount
and misses many payment opportunities due to bottleneck capacities in the network.
Eventually, failed payments will fall back to on-chain transactions,
instead of using the available (and already locked) resources efficiently.

In this paper, we propose to aggregate multiple paths to a \emph{payment flow},
which can in sum provide larger transaction volumes.
We believe that algorithms from the domain of flow networks in general
and the push-relabel algorithm~\cite{goldberg1988newapproach} in particular are appropriate candidates
for route selection in payment networks.

Our main contribution is an algorithm for distributed route selection, which is based on the push-relabel algorithm.
It can find feasible flows in a payment channel network and is safe for
concurrent execution. To this end, we introduce the concept of \emph{capacity locking}.
We show that our algorithm guarantees that routes are feasible flows
and at the same time does not violate any capacity constraints.
Our first results confirm that the approach
is able to handle a high number of flows and transaction
volumes. The results emphasize that our approach succeeds in scenarios where
single-path routing schemes are bound to fail.
In summary, we offer a new perspective on payment channel networks.

The remainder is structured as follows.
Sec.~\ref{sec:relwork} discusses related work.
Subsequently, Sec.~\ref{sec:paymentflows} introduces payment flows and describes the basic algorithmic design.
Sec.~\ref{sec:distributedpaymentflows} develops a distributed and concurrent route selection
algorithm. In Sec.~\ref{sec:evaluation}, we present and discuss first results,
before Sec.~\ref{sec:conclusion} concludes the paper.

\section{Background and Related Work}
\label{sec:relwork}
Payment channels are a new and unexplored concept.
The specifications~\cite{lightning-rfc-git} of the Lightning Network~\cite{poon2015bitcoin},
for example, are subject to constant change.
For the sake of clarity, we abstract from any specific payment channel design~\cite{wattenhofer2015duplexmicropayment,poon2015bitcoin}.

Routing in a payment channel network poses many challenges,
\eg, regarding the routing paradigm (per-hop routing vs.\ source routing)
and the topology (hub-and-spoke vs.\ peer-to-peer).
In this paper, we focus on route selection,
\ie, finding a route in a payment channel network that meets certain constraints.
Flare~\cite{prihodko2016flare}, a routing system for the Lightning Network,
creates a list of candidate routes from the set of channels with sufficient capacity.
So far, however, Flare and the Lightning Network opt for single-path routes.
In our work, we consider a payment as a flow
and elaborate the possibility to aggregate multiple paths.

We identify flow network algorithms as a promising direction to find multi-path routes.
While multi-commodity flows address a similar problem,
most of the existing approaches require global knowledge and/or a centralized routing coordinator.
The approach in~\cite{awerbuch1994improved} allows a distributed and concurrent execution
but solves the feasible-flow problem only approximately.
Our distributed algorithm, in contrast, guarantees that the selected route is a feasible flow.
Moreover, it can be executed concurrently without violating capacity constraints.

\section{Payment Flows}
\label{sec:paymentflows}

Payment flows describe a flow of units between pairs of nodes in a payment channel network.
Figure~\ref{fig:example} shows an example of a payment channel network
in which node $s$ wants to send a payment to node $t$.
We consider the payment channel network as a peer-to-peer network
in which nodes communicate directly with each other
and build an overlay network congruent with the payment channel network.
That is, we aim for a decentralized route selection.

\begin{figure}[t]
	\begin{center}
		\begin{minipage}{.6\textwidth}
		\begin{tikzpicture}[node distance=0.4cm]
			\tikzstyle{vertices}=[draw,circle,minimum height=20pt,outer
			sep=5pt, thick]
			\tikzstyle{pill}=[draw,rectangle,rounded corners=10pt,minimum
			width=40pt, thick]
			\tikzstyle{edges}=[midway, ->, thick]
			\node [vertices,pill] (s)  at (0,0) {$v_1 = s$};
			\node [vertices, above right = 0.2cm and 1.5cm of s] (v2) {$v_2$};
			\node [vertices, below right = 0.2cm and 1.5cm of s] (v3) {$v_3$};
			\node [vertices,pill, above right = 0.2cm and 1.5cm of v3] (t) {$v_4 = t$};

			\draw [edges] (s) to node  [auto] {$3$} (v2);
			\draw [edges] (s) to node  [auto, swap] {$2$} (v3);
			\draw [edges] (v2) to node [auto] {$2$} (v3);
			\draw [edges] (v2) to node [auto] {$1$} (t);
			\draw [edges] (v3) to node [auto,swap] {$3$} (t);
		\end{tikzpicture}
	\end{minipage}
	\hfill
	\begin{minipage}{.30\textwidth}
		\begin{tabularx}{\textwidth}{Xr}
		\toprule
		\textbf{Path} & \textbf{Vol.} \\
		\midrule
		$s \rightarrow v_2 \rightarrow t$ & 1\\
		$s \rightarrow v_2 \rightarrow v_3 \rightarrow t$ & 2\\
		$s \rightarrow v_3 \rightarrow t$ & 2\\
		\midrule
		maximum flow & 4\\
		\bottomrule
		\end{tabularx}
	\end{minipage}
	\end{center}
	\vspace{-1ex}
	\caption{Payment channel network example.}
	\label{fig:example}
	\vspace{-1ex}
\end{figure}
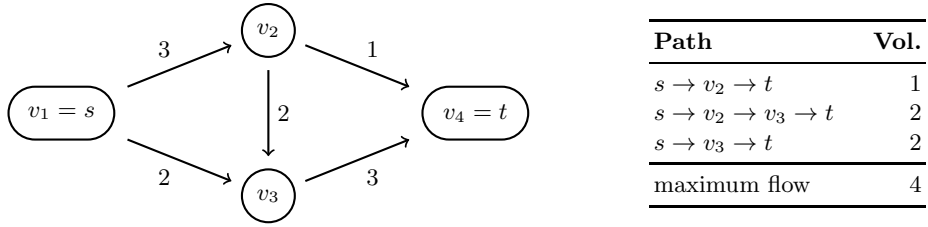

In order to process the payment, a path between $s$ and $t$ must exist.
Every path is a concatenation of payment channels.
Since payment channels have a capacity,
as indicated by the edge labeling in Figure~\ref{fig:example},
a path's transaction volume is limited
by the smallest payment channel capacity of this path.
While we cannot eliminate this limit, we can use multiple paths,
which in sum provide a higher transaction volume.

Determining the maximum transferable amount poses a challenge.
For example, simply finding all paths from source to sink and summing up their
respective capacities does not suffice;
paths may have common edges and thus need to share the respective capacities.
For the example in Figure~\ref{fig:example}, this naive approach would violate payment channel capacities.

The problem of finding the largest payment flow between two nodes $s$ and $t$
in a capacitated flow network is known as the \emph{maximum-flow problem}.
Several algorithmic solutions to the maximum-flow problem exist.
In the following, we elaborate on the efficient and well-studied
\emph{push-relabel}~\cite{goldberg1988newapproach} algorithm
and adopt it for the route selection of payment flows in payment channel networks.

We consider a network of payment channels as a directed graph $G = (V,E)$
and a non-negative function $c: V \times V \rightarrow \mathbb{R}_{\geq 0}$.
We call $c$ the capacity function, which determines a channel's
capacity $c(u, v)$ with $u,v \in V$ and $(u,v) \in E$.
Moreover, nodes $s$ and $t$ are the source and sink of the flow.
The resulting network $F = (G,c,s,t)$ is called a \emph{flow network}.

\begin{definition}[pseudo-flow, pre-flow, feasible flow]
	A \textbf{pseudo-flow} on the capacitated graph $(G, c)$ is a mapping $f: V \times V \to \mathbb{R}$ with the properties:
		\begin{align*}
		f(u, v) &\leq c(u, v),\, \forall (u, v) \in E && \text{(capacity constraint)}\\
		f(u, v) &= - f(v, u),\, \forall (u, v) \in E && \text{(skew symmetry)}
		\end{align*}

		Note that pseudo-flows do not require incoming and outgoing flows of a node to
		be equal. Therefore, nodes can hold \textbf{excess flow}, denoted by
		\begin{align*}
			x_f(u) = \sum\limits_{v \in V} f(v, u) - \sum\limits_{v \in V}
			f(u, v).
		\end{align*}
		A pre-flow and a feasible flow are special kinds of pseudo-flows with one of the following constraints.
		A \textbf{pre-flow} requires
		\begin{align*}
		x_f(v) \geq 0,\, \forall v \in V \setminus \{s, t\} && \text{(non-negativity constraint)}
		\end{align*}
		and a \textbf{feasible flow} requires
		\begin{align*}
		x_f(v) = 0,\, \forall v \in V \setminus \{s, t\} && \text{(conservation constraint).}
		\end{align*}

	\end{definition}
\begin{definition}[residual capacity and residual graph]
	The \textbf{residual capacity} $c_f$ with regard to the pseudo-flow $f$ of an edge
	$(u, v) \in E$ is defined as the difference between the edge's capacity
	and its flow:
	\begin{align*}
		c_f(u, v) = c(u, v) - f(u, v).
	\end{align*}

	Then, the \textbf{residual graph} $G_f(V, E_f)$ indicates when changes can
	be made to flow $f$ in the network $G(V, E)$,
	where
	\begin{align*}
		E_f = \{(u, v) \in V \times V : c_f(u, v) > 0\}.
	\end{align*}

	Note, that edges $(u, v)$ do not have to be in the original set of
	edges $E$.
\end{definition}

\begin{definition}[height function]
	A mapping $h: V \to \mathbb{N}$ is a \textbf{height function} for the
	push-relabel algorithm, if
	\begin{align*}
		h(s) = |V|, \qquad
		h(t) = 0, \qquad
		h(u) \leq h(v) + 1,\, \forall (u,v) \in E_f.
	\end{align*}
\end{definition}

At the beginning, the generic push-relabel algorithm initializes node heights and flow excess, as well as the edge pre-flow
values with $0$.
Please note that source node $s$, in contrast to all other nodes, is set to a height $|V|$.
Moreover, $s$'s outgoing edges are saturated according to the height function's third condition.
After these initialization steps, the algorithm
repeatedly selects a node $u$ as active node and applies one of the two basic
operations \texttt{push} and \texttt{relabel}.
Both operations have mutually exclusive conditions,
which ensure that either \texttt{push} or \texttt{relabel} is applicable at a time.

The \texttt{push} procedure (cf.\ Procedure~\ref{alg:push}) tries to push
an excess $\delta$ from node $u$ towards a neighbor $v$ with a smaller height.
The maximum possible $\delta$ is determined as
the minimum between the excess flow and the residual capacity of edge $(u,v)$.
Accordingly, edge capacities and excess values are updated to reflect flow changes in the residual graph.
The procedure requires that $u$ has excess flow
and that an unsaturated edge $(u,v)$ to a neighbor $v$ one level below $u$ exists.

Eventually, node $u$ will saturate all outgoing edges that lead to neighbors on a lower level.
In this case, the \texttt{relabel} procedure (cf.\ Procedure~\ref{alg:relabel})
\enquote{raises} node $u$ to a higher level.
The procedure calculates the minimal height of its neighbor nodes
and sets $u$'s height to the level above this minimum.
Therefore, the excess of node $u$ is guaranteed to be \enquote{pushable} in the next step.

\begin{figure}[t]
\vspace{-1em}
\begin{algorithm}[H]
	\caption{\texttt{push(u,v)}}
\label{alg:push}
\begin{algorithmic}%
	\ENSURE $x_f(u) > 0, c(u,v) > 0, h(u) = h(v) + 1$
    \STATE $\delta := \text{min}(\, x_f(u),\; c_f(u,v) \,)$ \label{alg:push:delta}
    \STATE $f(u,v) := f(u,v) + \delta$;\; $f(v,u) := f(v,u) - \delta$
    \STATE $x_f(u) := x_f(u) - \delta$;\; $x_f(v) := x_f(v) + \delta$
\end{algorithmic}
\end{algorithm}
\vspace{-3.6em}
\begin{algorithm}[H]
	\caption{\texttt{relabel(u)}}
\label{alg:relabel}
\begin{algorithmic}%
    \ENSURE $x_f(u) > 0, \forall (u, v) \in E: h(u) \leq h(v)$
	\STATE $h(u) := 1 + \text{min}\left(\, h(v) : (u, v) \in E \,\right)$
\end{algorithmic}
\end{algorithm}
\vspace{-3em}
\caption{Push-relabel algorithm~\cite{goldberg1988newapproach}, which solves the maximum-flow and the feasible-flow problem
in flow networks.}
\vspace{-1.5em}
\end{figure}

The generic push-relabel algorithms continues until the conditions fail for all nodes.
That means, the highest possible transaction volume has been pushed to the sink $t$
and all network excess has been pushed back to the source,
\ie, $x_f(v) =  0,\, \forall v \in V$.
At this point, the push-relabel algorithm has transformed the pre-flow into a maximum flow
and hence solved the maximum-flow problem.

In a payment channel network, however, it is often not necessary to know the maximum transaction volume.
Rather, we want to find a payment flow that can process a certain amount only.
This is a slightly different problem, which is known as the \emph{feasible-flow problem}.
Fortunately, the push-relabel can easily be modified to solve the feasible-flow problem:
in order to find a payment flow from source $s$ to sink $t$ with a transaction volume $d$,
we can simply insert a new (virtual) node to the payment network.
We call it the \emph{pre-source} $s'$, with a single edge $(s', s)$ and capacity $c(s', s) = d$.
The virtual edge caps the transferable amount at exactly $d$.
We can now still apply the push-relabel algorithm, as described before, to find a feasible-flow in this network.

So far, we assumed only one instance of the push-relabel algorithm.
If multiple flows ought to be found subsequently in the same network,
the initial flow of one instance is the result of the last instance.
A generalization for subsequent flows, however, is easily possible.
The following section is dedicated to show how the push-relabel algorithm can be adapted
to enable route selection for concurrent and distributed payment flows.

\section{Concurrent and Distributed Payment Flows}\label{sec:distributedpaymentflows}

In payment channel networks,
it is desirable to allow a concurrent execution of the route selection algorithm.
To this end, simply running multiple instances of the push-relabel algorithm in parallel is not enough:
one instance for flow $f_1$, for example, could consume the reverse edges' residual capacity
that belong to another instance for flow $f_2$.
We call this issue \emph{capacity stealing}.

The problem domain of finding flows $f_1, \dots, f_k$ for $k$ commodities
with source-sink pairs $(s_1, t_1), \dots, (s_k, t_k)$ that meet the total capacity constraint
\begin{align*}
	F(u,v) = \sum\limits_{i=1}^{k} f_i(u,v) \leq c(u,v),\, \forall (u,v) \in E,
\end{align*}
are known as \emph{multi-commodity flow problems}.

As our main contribution,
we propose a modified push-relabel algorithm that allows to find feasible
flows in a concurrent multi-commodity scenario. To this end, we introduce the
concept of \emph{capacity locking}:
flow volumes are accounted for every commodity independently, while still
respecting each payment channel's total capacity constraint.
The capacities on the reverse edges created by a flow $f_1$ are therefore \emph{locked} for another flow
$f_2$, which prevents capacity stealing.

\begin{definition}[locked capacities and new residual capacity]
\label{def:capacitylocking}
Let
the \textbf{locked capacity} and \textbf{total locked
capacity} of flow $f_i$ on edge $(u,v)$ be
\begin{align*}
	l_i(u,v) = \text{max}(0, f_i(u,v))\qquad \text{and} \qquad  L(u,v) =
	\sum\limits_{i=1}^{k} l_i(u,v).
\end{align*}

Accordingly, the \textbf{residual capacity} is redefined as
\begin{align*}
	c_i(u,v) &= c(u,v) - L(u,v) + l_i(v, u),
\end{align*}
which yields an individual residual graph $G_i(V, E_i$) for each commodity $i$.
\end{definition}

Definition~\ref{def:capacitylocking} ensures
that there is always enough residual capacity on the reverse edges available
to push the existing excess back to the source.
Except for this augmented definition of the residual capacity,
the \texttt{locked-push} procedure (cf.\ Procedure~\ref{alg:lockedpush})
is similar to the original \texttt{push} procedure.
Note, however, that the modified push-relabel algorithm does not necessarily yield optimal flows
in the multi-commodity scenario.
It guarantees \emph{validity}, though,
which makes it superior compared to other approaches from this domain~\cite{awerbuch1994improved}.

In the following, we prove validity for our proposed algorithm.
As the skew-symmetry and flow-conservation constraints follow directly from
the definition of the algorithm, it suffices to show that it yields flows that
respect the total capacity constraint.
\begin{lemma}
	The total capacity constraint $F(u,v) \leq c(u,v),\, \forall (u,v) \in E$
	is never violated.
\end{lemma}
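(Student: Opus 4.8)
The plan is to show that for every edge $(u,v)\in E$ the total flow $F(u,v)=\sum_{i=1}^k f_i(u,v)$ stays bounded by $c(u,v)$ throughout any interleaved execution of the \texttt{locked-push} and \texttt{relabel} operations. First I would observe that only \texttt{locked-push} changes flow values, so it suffices to maintain the claimed bound as an invariant under a single \texttt{locked-push(u,v)} step for an arbitrary commodity $i$. I would phrase the invariant in the symmetric form
\begin{align*}
F(u,v) \le c(u,v) \quad\text{and}\quad F(v,u) \le c(v,u) \qquad \forall (u,v)\in E,
\end{align*}
and prove it by induction on the number of push operations performed across all commodities; the base case is the all-zero initialization (after saturating $s$'s outgoing edges, which only pushes up to $c(s,\cdot)$), where the invariant holds trivially.

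For the inductive step I would split $F(u,v)$ using locked capacities. Writing $L(u,v)=\sum_i l_i(u,v)$ with $l_i(u,v)=\max(0,f_i(u,v))$, note that for each $i$ either $f_i(u,v)=l_i(u,v)\ge 0$ or $f_i(u,v)=-l_i(v,u)<0$, so in all cases $f_i(u,v)\le l_i(u,v)-l_i(v,u)$, hence $F(u,v)\le L(u,v)-L(v,u)$; and in fact one can show equality is not needed — the one-sided bound suffices. Thus it is enough to prove the stronger invariant $L(u,v)\le c(u,v)$ for every edge. When \texttt{locked-push(u,v)} fires for commodity $i$ with increment $\delta=\min(x_{f_i}(u),c_i(u,v))$, the quantity $l_i(u,v)$ can only increase if $f_i(u,v)$ becomes (more) positive; the increase in $l_i(u,v)$ is at most the part of $\delta$ that pushes $f_i(u,v)$ above $0$, and simultaneously $l_i(v,u)$ decreases by the remaining part of $\delta$ (the part that cancels existing negative flow). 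Therefore the net change in $L(u,v)+?$ must be tracked carefully; the clean way is: after the push, $l_i^{\mathrm{new}}(u,v) - l_i^{\mathrm{new}}(v,u) = l_i^{\mathrm{old}}(u,v) - l_i^{\mathrm{old}}(v,u) + \delta$, so
\begin{align*}
L^{\mathrm{new}}(u,v) - l_i^{\mathrm{new}}(v,u) \;\le\; L^{\mathrm{old}}(u,v) - l_i^{\mathrm{old}}(v,u) + \delta \;\le\; c(u,v),
\end{align*}
where the last inequality is exactly the guard $\delta \le c_i(u,v) = c(u,v) - L(u,v) + l_i(v,u)$ evaluated at the pre-push state. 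Since $l_i^{\mathrm{new}}(v,u)\ge 0$ this yields $L^{\mathrm{new}}(u,v)\le c(u,v)$, and no other commodity's $l_j(u,v)$ or any $l_\cdot(v,u)$ was touched, closing the induction. The bound $F(u,v)\le L(u,v)\le c(u,v)$ then follows.

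The main obstacle I anticipate is not the algebra of a single push but correctly accounting for \emph{concurrency}: different commodities read $L(u,v)$ and their own $l_i$ and $f_i$ asynchronously, so I must argue that the residual-capacity guard $c_i(u,v)$ each instance uses is evaluated against a consistent snapshot of the shared totals $L(u,v)$ at the moment the push is committed (otherwise two commodities could each see slack and jointly overshoot). I would therefore state explicitly the assumed atomicity granularity — that each \texttt{locked-push} reads $L(u,v)$, $l_i(v,u)$, computes $\delta$, and commits the flow update atomically per edge — and remark that under this assumption the interleaved execution is equivalent to some sequential ordering of the same pushes, to which the induction above applies verbatim. A secondary point to handle is that \texttt{relabel} changes no flows and hence preserves the invariant trivially; I would dispatch it in one line. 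With these pieces the capacity constraint is never violated.
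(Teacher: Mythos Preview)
Your plan follows the same skeleton as the paper's proof: bound $F(u,v)$ by $L(u,v)$ via $f_i(u,v)\le l_i(u,v)$, and then argue $L(u,v)\le c(u,v)$ from the guard $\delta\le c_i(u,v)=c(u,v)-L(u,v)+l_i(v,u)$. The paper compresses this into a one-line chain $F\le L\le\sum_i c_i\le c$; your inductive version is more careful and, in addition, makes explicit the atomicity assumption on \texttt{locked-push} that the paper leaves implicit, which is a genuine improvement.

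One slip to fix in the inductive step: from $L^{\mathrm{new}}(u,v)-l_i^{\mathrm{new}}(v,u)\le c(u,v)$ and $l_i^{\mathrm{new}}(v,u)\ge 0$ you only get $L^{\mathrm{new}}(u,v)\le c(u,v)+l_i^{\mathrm{new}}(v,u)$, not $L^{\mathrm{new}}(u,v)\le c(u,v)$; the inequality goes the wrong way. The easy patch is to use that $l_i(u,v)$ and $l_i(v,u)$ cannot both be positive (by skew symmetry $f_i(u,v)=l_i(u,v)-l_i(v,u)$ with the two terms having disjoint support). Hence either $l_i^{\mathrm{new}}(v,u)=0$, in which case your displayed bound gives $L^{\mathrm{new}}(u,v)\le c(u,v)$ directly, or $l_i^{\mathrm{new}}(v,u)>0$, which forces $l_i^{\mathrm{new}}(u,v)=0$ and then $L^{\mathrm{new}}(u,v)=L^{\mathrm{old}}(u,v)-l_i^{\mathrm{old}}(u,v)\le L^{\mathrm{old}}(u,v)\le c(u,v)$ by the inductive hypothesis. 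With that correction the argument goes through.
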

\begin{proof}
	For a \texttt{locked-push} of commodity $i$ on edge $(u,v)$, the
	change in flow volume $\delta$ is always chosen to be at maximum the remaining residual capacity of
	the flow on this edge. Accordingly, lock $l_i(u,v)$ cannot be greater
	than $\delta$. Therefore, the locked capacity never exceeds the edge
	capacity for each individual edge. It follows that the total
	capacity constraint is never violated:
	\begin{align*}
		F(u,v) = \sum\limits_{i=1}^{k} f_i(u,v) \leq L(u,v) = \sum\limits_{i=1}^{k} l_i(u,v)
		\leq \sum\limits_{i=1}^{k} c_i(u,v) \leq c(u,v).
		\raisebox{-5mm}{\hspace{1em}$\square$}
	\end{align*}
\end{proof}

\begin{figure}[t]
\vspace{-1em}
\begin{algorithm}[H]
\caption{\texttt{locked-push(i,u,v)}}
\label{alg:lockedpush}
\begin{algorithmic}%
    \ENSURE $x_i(u) > 0, c_i(u,v) > 0, h_i(u) > h_i(v)$
    \STATE $l_i(u,v) := \text{max}(0,f_i(u,v))$; $l_i(v,u) := \text{max}(0,f_i(v,u))$
	\STATE $c_i(u, v) := c(u,v) - L(u,v) + l_i(v,u)$
	\STATE $\delta := \text{min}(x_i(u), c_i(u,v))$
	\STATE $f_i(u,v) := f_i(u,v) + \delta$; $f_i(v,u) := f_i(v,u) - \delta$
	\STATE $L(u,v) := L(u,v) + \delta$; $L(v,u) := L(v,u) - \delta$
    \STATE $x_i(u) := x_i(u) - \delta$;\; $x_i(v) := x_i(v) + \delta$
\end{algorithmic}
\end{algorithm}
\vspace{-3em}
\caption{Capacity locking enables concurrent push-relabel execution without violating capacity constraints, \ie, capacity stealing.}
\vspace{-1.5em}
\end{figure}

In order to execute the modified algorithm in a distributed scenario,
the asynchronous distributed algorithm,
introduced in~\cite{goldberg1988newapproach}, is adapted to our needs:
each node maintains a local view on flow states, channel capacities, and its neighbors' height.
Furthermore, each node maintains routing information and its own height.
Then, every node $u$ with positive excess tries to push its excess
along an unsaturated outgoing edge to a neighbor $v$ of smaller height.
A \texttt{locked-push} can only be committed,
if $v$ acknowledges $u$ that it is has indeed a smaller height.
Alternatively, $v$ can reject the \texttt{locked-push} and respond with its actual height.
This way, $u$ learns its neighbors' height and can trigger \texttt{relabel}, if necessary.
After relabeling, $u$ sends height updates to its neighbors. The source and
sink node can determine the termination of the algorithm and communicate the result to
finalize route selection.

\section{Evaluation}\label{sec:evaluation}
In order to evaluate our approach,
we constructed a Watts-Strogatz graph with $\beta = 0.5$, $n = 200$, and a node degree of $10$.
Channel capacities were generated by uniform random sampling from $[0, 10]$.
In the following, we compare the sequential (seq.) and the concurrent (conc.) algorithm.

First, we are interested in the number of flows that each algorithm can handle.
To this end, we sampled the transaction volume from $[0, 20]$ and calculated
the mean success rate over $10$ runs, \ie, the share of successfully found flows.
The results, shown on the left of Figure~\ref{fig:success_flows},
indicate that both algorithms are able to find a large number of flows (relative to the network size).
At some point, when network capacities are exhausted, the success rate eventually drops.
Single-path approaches, in contrast, achieve in the best case a $0.5$ success
rate (cf.\ horizontal line in the plot):
while the maximum channel capacity is $10$,
on average every second transaction volume is in $(10,20]$
and therefore not feasible with a single path.
Effectively, this reduces the utilization of the available capacities by 50\%.

Second, we are interested in the transaction volume that we can achieve by aggregating multiple paths.
To this end, we set the number of flows to $128$, increased the transaction
volume, and calculated the mean success rate.
The results, shown on the right of Figure~\ref{fig:success_flows},
suggest that again both variations are able to route relatively large volumes.
In more than 50\% of the cases, the concurrent algorithm still manages to
process all $128$ flows for up to a volume of $15$ each.
This is especially noteworthy,
as a single-path approach would not be able to route a single payment with a volume exceeding $10$
in our scenario (cf.\ vertical line in the plot).
These first results illustrate that our approach is superior compared to
single-path route selection schemes.

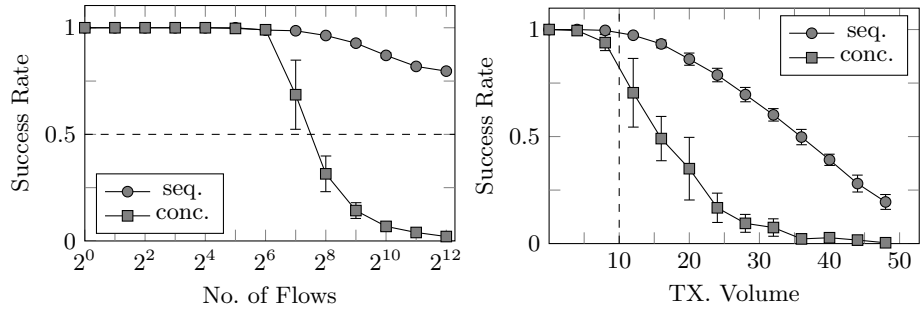
\begin{figure}[t]
	\begin{minipage}{.5\textwidth}
		\begin{tikzpicture}
			\begin{axis}[
				width=20em,
				height=14.5em,
				xlabel={No. of Flows},
				ylabel={Success Rate},
				xmode=log,
				log basis x={2},
				xtick={1,4,16,64,256,1024,4096},
				ytick={0,0.5,1.0},
				label style={font=\small},
				tick label style={font=\small},
				ymin=0,
				xmin=1,
				xmax=5000,
				extra x ticks={2,8,32,128,512,2048},
				extra x tick labels={},
				extra y ticks={0.25,0.75},
				extra y tick labels={},
				legend entries={seq.,conc.},
				legend pos=south west,
				cycle list name=black white,
				]
				\addplot+[error bars/.cd, y dir = both, y explicit]
				table[
				x=r,
				y=sbs_suc,
				y error=ci] 
				{data_flow_aggregated_sbs.dat};
				\addplot+[error bars/.cd, y dir = both, y explicit]
				table[
				x=r,
				y=cnc_suc,
				y error=ci] 
				{data_flow_aggregated_cnc.dat};
				\addplot[dashed, mark=none] coordinates {(1,0.5) (5000,0.5)};
			\end{axis}
		\end{tikzpicture}
		\end{minipage}
	\begin{minipage}{.5\textwidth}
		\begin{tikzpicture}
			\begin{axis}[
				width=20em,
				height=14.5em,
				xlabel={TX. Volume},
				ylabel={Success Rate},
				ytick={0,0.5,1.0},
				xtick={10, 20, 30, 40, 50},
				label style={font=\small},
				tick label style={font=\small},
				ymin=0,
				ymax=1.1,
				xmin=0,
				extra x ticks={5,15, 25, 35, 45},
				extra x tick labels={},
				extra y ticks={0.25, 0.75},
				extra y tick labels={},
				legend entries={seq.,conc.},
				legend pos=north east,
				cycle list name=black white,
				]
				\addplot+[error bars/.cd, y dir = both, y explicit]
				table[
				x=max_demand,
				y=sbs_suc,
				y error=ci] 
				{data_demand_aggregated_sbs.dat};
				\addplot+[error bars/.cd, y dir = both, y explicit]
				table[
				x=max_demand,
				y=cnc_suc,
				y error=ci] 
				{data_demand_aggregated_cnc.dat};
				\addplot[dashed, mark=none] coordinates {(10,0) (10,1.5)};
			\end{axis}

		\end{tikzpicture}
		\end{minipage}
	\caption{Flow Network Simulation: mean success rate over $10$ runs, dependant on the number of
	flows and transaction volume. Error bars show the 95\% confidence interval.}
	\label{fig:success_flows}
\end{figure}

\section{Conclusion}
\label{sec:conclusion}
In this paper, we argued that currently deployed single-path routing
schemes for payment channel networks suffer from a number of drawbacks.
Most prominently, they utilize the available capacities in the network inefficiently.
Eventually, single-path routes will lead to on-chain transactions as a fallback strategy
and therefore subvert the idea of payment channels.

We addressed this issue by presenting a novel perspective on route selection
that considers payment channel networks as flow networks.
Flow network algorithms utilize the available capacity by aggregating multiple paths,
which allow to route transactions of larger volume.
We proposed an extended push-relabel algorithm that finds flows based on
local knowledge. Thus, it is suitable for the concurrent and distributed
scenario encountered in payment channel networks.
We proved the validity of the flows and showed that our algorithm is indeed able to satisfy demands,
where single-path based approaches fail.

\end{document}